\newif\ifEntropy
\renewcommand{\ul}[1]{\underline{#1}}
\newtheorem{Lemma}{Lemma}\newtheorem{Proposition}{Proposition}\newtheorem{Theorem}{Theorem}
\theoremstyle{definition}\newtheorem{Definition}{Definition}
\theoremstyle{remark}\newtheorem{Example}{Example}\newtheorem{Remark}{Remark}
\newcommand{\Title}{\title}
\newcommand{\Author}{\author}
\newcommand{\keyword}{\keywords}
\newcommand{\MSC}{\subjclass[2000]}
\newcommand{\ul}[1]{\underline{#1}}
\newcommand{\Acal}{\mathcal{A}}
\newcommand{\PIdrei}[9]{
    \node (X123) at (3,6)        {#1};
    \node (X12) at (1.5,5)       {#2};
    \node (X13) at (3  ,5)       {#3};
    \node (X23) at (4.5,5)       {#4};
    \node (X12X13) at (1.5,4)    {#5};
    \node (X12X23) at (3  ,4)    {#6};
    \node (X13X23) at (4.5,4)    {#7};
    \node (X1) at (0.5,3)        {#8};
    \node (X2) at (2. ,3)        {#9};
    \PIdreiTeilzwei
}
\newcommand{\PIdreiTeilzwei}[9]{
    \node (X3) at (3.5,3)        {#1};
    \node (X12X13X23) at (6 ,3)  {#2};
    \node (X1X23) at (1.5,2)     {#3};
    \node (X2X13) at (3  ,2)     {#4};
    \node (X3X12) at (4.5,2)     {#5};
    \node (X1X2) at (1.5,1)      {#6};
    \node (X1X3) at (3  ,1)      {#7};
    \node (X2X3) at (4.5,1)      {#8};
    \node (X1X2X3) at (3,0)      {#9};
    \begin{scope}[->,>=stealth]
      \path (X123) edge (X12);  \path (X123) edge (X13);  \path (X123) edge (X23);
      \path (X12) edge (X12X13);  \path (X12) edge (X12X23);
      \path (X13) edge (X12X13);  \path (X13) edge (X13X23);
      \path (X23) edge (X12X23);  \path (X23) edge (X13X23);
      \path (X12X13) edge (X1);  \path (X12X13) edge (X12X13X23);
      \path (X12X23) edge (X2);  \path (X12X23) edge (X12X13X23);
      \path (X13X23) edge (X3);  \path (X13X23) edge (X12X13X23);
      \path (X1) edge (X1X23);  \path (X12X13X23) edge (X1X23);
      \path (X2) edge (X2X13);  \path (X12X13X23) edge (X2X13);
      \path (X3) edge (X3X12);  \path (X12X13X23) edge (X3X12);
      \path (X1X23) edge (X1X2);  \path (X1X23) edge (X1X3);
      \path (X2X13) edge (X1X2);  \path (X2X13) edge (X2X3);
      \path (X3X12) edge (X1X3);  \path (X3X12) edge (X2X3);
      \path (X1X2) edge (X1X2X3);  \path (X1X3) edge (X1X2X3);  \path (X2X3) edge (X1X2X3);
    \end{scope}
}
\begin{document}
\fi

\Title{Secret Sharing and Shared Information}
\Author{Johannes Rauh}
\ifEntropy
\AuthorNames{Johannes Rauh}
\else
\email{jrauh@mis.mpg.de}
\fi

\address{%
Max Planck Institute for Mathematics in the Sciences, Leipzig, Germany\ifEntropy{} ; jrauh@mis.mpg.de\fi}

\newcommand{\abstracttext}{ Secret sharing is a cryptographic discipline in which the goal is to distribute information about a secret
  over a set of participants in such a way that only specific authorized combinations of participants together can
  reconstruct the secret.  Thus, secret sharing schemes are systems of variables in which it is very clearly specified
  which subsets have information about the secret.  As such, they provide perfect model systems for information
  decompositions.  However, following this intuition too far leads to an information decomposition with negative partial
  information terms, which are difficult to interpret.  One possible explanation is that the partial information lattice
  proposed by Williams and Beer is incomplete and has to be extended to incorporate terms corresponding to higher order
  redundancy.  These results put bounds on information decompositions that follow the partial information framework, and
  they hint at where the partial information lattice needs to be improved.}
\ifEntropy
\abstract{\abstracttext}
\else
\begin{abstract}
\abstracttext
\end{abstract}
\fi

\keyword{Information decomposition, partial information lattice, shared information, secret sharing}

\MSC{94A17; 
  94A62} 

\ifEntropy
\begin{document}
\else
\maketitle
\fi

\section{Introduction}

\citet{WilliamsBeer:Nonneg_Decomposition_of_Multiinformation} have proposed a general framework to decompose the
multivariate mutual information $I(S;X_{1},\dots,X_{n})$ between a target random variable~$S$ and predictor random
variables $X_{1},\dots,X_{n}$ into different terms (called \emph{partial information terms}) according to different ways
in which combinations of the variables $X_{1},\dots,X_{n}$ provide unique, shared or synergistic information about~$S$.
Williams and Beer argue that such a decomposition can be based on a measure of shared information.  The underlying idea
is that any information can be classified according to ``who knows what.''  But is this true?

A situation where the question ``who knows what'' is easy to answer very precisely is secret sharing, a part of
cryptography in which the goal is to distribute information (the \emph{secret}) over a set of participants such that the
secret can only be reconstructed if certain \emph{authorized} combinations of the participants join their information
(see~\cite{Beimel:Secret_sharing_survey} for a survey).  The set of authorized combinations is called the \emph{access
  structure}.  Formally, the secret is modelled as a random variable~$S$, and a \emph{secret sharing scheme} assigns a
random variable $X_{i}$ to each participant~$i$ in such a way that, if $\{i_{1},\dots,i_{k}\}$ is an authorized set of
participants, then $S$ is a function of $X_{i_{1}},\dots,X_{i_{k}}$; that is, $H(S|X_{i_{1}},\dots,X_{i_{k}}) = 0$; and,
conversely, if $\{i_{1},\dots,i_{k}\}$ is not authorized, then $H(S|X_{i_{1}},\dots,X_{i_{k}}) > 0$.  It is assumed that
the participants know the scheme, and so any authorized combination of participants can reconstruct the secret if they
join their information.  A secret sharing scheme is \emph{perfect} if non-authorized sets of participants know nothing
about the secret; i.e., $H(S|X_{i_{1}},\dots,X_{i_{k}}) = H(S)$.  Thus, in a perfect secret sharing scheme, it is very
clearly specified ``who knows what.''  In this sense, perfect secret sharing schemes provide model systems for which it
should be easy to write down an information decomposition.

One connection between secret sharing and information decompositions is that the set of access structures of secret
sharing schemes with $n$ participants is in one-to-one correspondence with the partial information terms of Williams and
Beer.
This correspondence makes it possible to give another interpretation to all partial information terms: Namely, the
partial information term is a measure of how similar a given system of random variables is to a secret sharing scheme
with a given access structure.

This correspondence also allows to introduce the \emph{secret sharing property} that makes precise the above intuition:
An information decomposition satisfies this property if and only if any perfect secret sharing scheme has just a single
partial information term (which corresponds to its access structure).  Lemma~\ref{lem:WB-implies-ssp} states the secret
sharing property is implied by the Williams and Beer axioms, which shows that the secret sharing property plays well
together with the ideas of Williams and Beer.  Proposition~\ref{prop:prescribing} shows that in an information
decomposition that satisfies a natural generalization of this property, it is possible to prescribe arbitrary
nonnegative values to all partial information terms.

These results suggest that perfect secret sharing schemes fit well together with the ideas of Williams and Beer.
However, following this intuition too far leads to inconsistencies.  As Theorem~\ref{sec:incomp-result} shows, extending
the secret sharing property to pairs of perfect secret sharing schemes leads to negative partial information terms.
While other authors have started to build an intuition for negative partial terms and argue that they may be unavoidable
in information decompositions, the concluding section collects arguments against such claims and proposes as another
possible solutions that the Williams and Beer framework is incomplete and is missing nodes that represent higher
order redundancy.

\medskip%
Cryptography, where the goal is not only to transport information (as in coding theory) but also to keep it concealed
from unauthorized parties, has initiated many interesting developments in information theory, for example, by
introducing new information measures and re-interpreting older ones; see, for example,
\cite{MaurerWolf97:intrinsic_conditional_MI,CsiszarNarayan04:SEcrecy_capacities_for_multiple_terminals}.  This
manuscript focuses on another contribution of cryptography: probabilistic systems with well-defined distribution of
information.

\medskip%
The remainder of this article is organized as follows: Section~\ref{sec:psss} summarizes definitions and results about
secret sharing schemes.  Section~\ref{sec:info-dcomp-psss} introduces different secret sharing properties that fix the values
that a measure of shared information assigns to perfect secret sharing schemes and combinations thereof.  The main
result of Section~\ref{sec:incomp-result} is that the pairwise secret sharing property leads to negative partial
information terms.  Section~\ref{sec:discussion} discusses the implications of this incompatibility result.

\section{Perfect secret sharing schemes}
\label{sec:psss}

We consider $n$ participants among whom we want to distribute information about a secret in such a way that we can
control which subsets of participants together can decrypt the secret.
\begin{Definition}
  An \emph{access structure} $\Acal$ is a family of subsets of $\{1,\dots,n\}$, closed to taking supersets.
  Elements of $\Acal$ are called \emph{authorized sets}.

  A \emph{secret sharing scheme} with access structure $\Acal$ is a family of random variables $S$, $X_{1},\dots,X_{n}$
  such that:
  \begin{itemize}
  \item $H(X_{A},S) = H(X_{A})$, whenever $A\in\Acal$.
  \end{itemize}
  Here, $X_{A}=(X_{i})_{i\in A}$ for all subsets $A\subseteq\{1,\dots,n\}$.
  A secret sharing scheme is \emph{perfect} if
  \begin{itemize}
  \item $H(X_{A},S) = H(X_{A}) + H(S)$, whenever $A\notin\Acal$.
  \end{itemize}
\end{Definition}
The condition for perfection is equivalent to~$H(S|X_{A}) = H(S)$.
See~\cite{Beimel:Secret_sharing_survey} for a survey on secret sharing.

\begin{Theorem}
  \label{thm:psss-existence}
  For any access structure~$\Acal$ and any $h>0$, there exists a perfect secret sharing scheme with access
  structure~$\Acal$ for which the entropy of the secret $S$ equals~$H(S) = h$.
\end{Theorem}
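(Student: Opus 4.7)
The plan is to apply the Ito--Saito--Nishizeki construction, which realizes any monotone access structure by independently and additively sharing the secret over each minimal authorized set. First I reduce to the antichain $\mathcal{M}$ of minimal elements of $\Acal$, so that $A \in \Acal$ if and only if $A \supseteq B$ for some $B \in \mathcal{M}$. To achieve the prescribed entropy, I fix a finite abelian group $G$ admitting a distribution of Shannon entropy exactly~$h$: for instance $G = \mathbb{Z}/q\mathbb{Z}$ with $\log q \geq h$, since the entropy of a one-parameter family interpolating between a point mass (entropy~$0$) and the uniform distribution (entropy $\log q$) attains every value in $[0, \log q]$ by the intermediate value theorem. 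Let $S$ be distributed on $G$ with $H(S) = h$.

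For each $B = \{i_{1}, \dots, i_{k}\} \in \mathcal{M}$, and with fresh randomness independent across different $B$, I draw $R_{B, i_{1}}, \dots, R_{B, i_{k-1}}$ uniformly and independently from $G$ and set $R_{B, i_{k}} = S - \sum_{j<k} R_{B, i_{j}}$ in~$G$. Participant $i$ then receives $X_{i} = (R_{B, i})_{B \in \mathcal{M},\, i \in B}$. For any $A \in \Acal$, some $B \in \mathcal{M}$ satisfies $B \subseteq A$, so $X_{A}$ contains all shares of $B$, whose sum equals~$S$; hence $H(X_{A}, S) = H(X_{A})$, as required by the access-structure condition.

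The main obstacle is verifying perfection when $A \notin \Acal$. For any such $A$ and any $B \in \mathcal{M}$, the intersection $A \cap B$ is a strict subset of $B$, so at least one share of $B$ is missing from $X_{A}$; a standard one-time-pad computation then shows that, conditional on $S = s$, the tuple $(R_{B, i})_{i \in A \cap B}$ is uniform on $G^{|A \cap B|}$ with a law that does not depend on~$s$. The delicate step is combining these per-set statements: because the randomizations across distinct $B \in \mathcal{M}$ are drawn independently, the tuples $(R_{B,i})_{i \in A \cap B}$ remain conditionally independent given~$S$, and their joint conditional law is therefore also independent of~$s$. Hence $X_{A}$ is independent of~$S$, so $H(S \mid X_{A}) = H(S)$, which completes the proof.
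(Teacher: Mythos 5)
Your proof is correct and follows essentially the same route as the paper: the paper simply cites the Ito--Saito--Nishizeki construction (independent additive sharing of the secret over each inclusion-minimal authorized set), which you spell out and verify in full. The only minor difference is how the target entropy is reached --- the paper combines one-bit copies and then perturbs the secret's distribution (citing Beimel's survey), whereas you directly take a non-uniform secret on a cyclic group of order $q \ge 2^{h}$; both work because the construction and the perfection argument are insensitive to the distribution of~$S$.
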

\begin{proof}
  Perfect secret sharing schemes for arbitrary access structures were first constructed
  by~\citet{ItoSaitoNishizeki87:General_secret_sharing_schemes}.  In this construction, the entropy of the secret equals
  \SI{1}{bit}.  Combining $n$ copies of such a secret sharing scheme gives a secret sharing scheme with a secret of $n$~bit.
  As explained in~\cite[Claim~1]{Beimel:Secret_sharing_survey}, the distribution of the secret may be perturbed
  arbitrarily (as long as the support of the distribution remains the same).  In this way it is possible to prescribe
  the entropy of the secret in a perfect secret sharing scheme.
\end{proof}

\begin{Example}
  \label{ex:psss}
  Let $Y_{1},Y_{2},Y_{3},S$ be independent uniform binary random variables, and let $A = (Y_{1}, Y_{2}\oplus S)$, $B =
  (Y_{2}, Y_{3}\oplus S)$, $C = (Y_{3}, Y_{1}\oplus S)$, where $\oplus$ denotes addition modulo 2 (or the
  $\operatorname{XOR}$ operation).  Then $(S, A, B, C)$ is a perfect secret sharing scheme with access structure
  \begin{equation*}
    \{A,B\}, \{A,C\}, \{B,C\}, \quad\{A,B,C\}.
  \end{equation*}
\end{Example}

It may be of little surprise that integer addition modulo $k$ is an important building block in many secret sharing
schemes.

While existence of perfect secret sharing schemes is solved, there remains the problem of finding efficient secret
sharing schemes in the sense that the variables $X_{1},\dots,X_{n}$ should be as small as possible (in the sense of a
small entropy), given a fixed entropy of the secret.  For instance, in Example~\ref{ex:psss}, $H(X_{i})/H(S) = 2$ for
all~$i$.  See~\cite{Beimel:Secret_sharing_survey} for a survey.

\bigskip

Since an access structure~$\Acal$ is closed to taking supersets, it is uniquely determined by its inclusion-minimal
elements
\begin{equation*}
  \ul\Acal := \big\{ A \in\Acal : \text{if }B\subseteq A\text{ and } B\neq A, \text{ then }B\notin\Acal \big\}.
\end{equation*}
For instance, in Example~\ref{ex:psss}, the first three elements belong to~$\ul\Acal$.  The set $\ul\Acal$ has the
property that no element of $\ul\Acal$ is a subset of another element of~$\ul\Acal$.  Such a collection of sets is
called an \emph{antichain}.  Conversely, any such antichain equals the set of inclusion-minimal elements of a unique
access structure.

The antichains have a natural lattice structure, which was used by Williams and Beer to order the different values of
shared information and organize them into what they call the \emph{partial information lattice}.  The same lattice also
has a description in terms of secret sharing.
\begin{Definition}
  \label{def:PI-lattice}
  Let $(A_{1},\dots,A_{k})$ and $(B_{1},\dots,B_{l})$ be antichains.  Then
  \begin{equation*}
    (A_{1},\dots,A_{k}) \preceq (B_{1},\dots,B_{l})
    \quad:\Longleftrightarrow\quad
    \text{for any }B_{i}\text{ there exists }A_{j}\text{ with }A_{j}\subseteq B_{i}.
  \end{equation*}
\end{Definition}

The partial information lattice for the case $n=3$ is depicted in Figure~\ref{fig:xor3}.

\begin{Lemma}
  \label{lem:lattice-authorized}
  Let $\Acal$ be an access structure on~$\{1,\dots,n\}$, and let $(B_{1},\dots,B_{l})$ be an antichain.  Then
  $B_{1},\dots,B_{l}$ are all authorized for~$\Acal$ if and only if $\ul\Acal\preceq(B_{1},\dots,B_{l})$.
\end{Lemma}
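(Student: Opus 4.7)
The plan is to unwind both directions directly from the definitions, using two simple facts about access structures: (i) $\ul\Acal \subseteq \Acal$, and (ii) every member of $\Acal$ contains at least one inclusion-minimal element of $\Acal$, i.e., an element of $\ul\Acal$. Fact (ii) follows because $\{1,\dots,n\}$ is finite, so any descending chain of elements of $\Acal$ terminates at a minimal one.

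For the forward direction, I would assume each $B_i$ is authorized and invoke fact (ii): for each $i$, pick some $A_j \in \ul\Acal$ with $A_j \subseteq B_i$. This is exactly what Definition~\ref{def:PI-lattice} requires for $\ul\Acal \preceq (B_1,\dots,B_l)$.

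For the backward direction, I would assume $\ul\Acal \preceq (B_1,\dots,B_l)$. Then for each $B_i$ there is some $A_j \in \ul\Acal$ with $A_j \subseteq B_i$. Since $A_j \in \Acal$ by fact (i), and $\Acal$ is closed under taking supersets by the definition of an access structure, $B_i \in \Acal$; that is, $B_i$ is authorized.

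The statement is essentially a rewriting of the partial order $\preceq$ in terms of the access structure it encodes, so I do not expect any genuine obstacle; the only thing to be careful about is to keep straight that $\ul\Acal$ stands on the left of $\preceq$, so the quantifier in Definition~\ref{def:PI-lattice} runs ``for every $B_i$ there exists $A_j$,'' which matches precisely what fact (ii) delivers.
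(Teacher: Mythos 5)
Your proof is correct and takes the same route as the paper, which simply states that the lemma follows directly from the definitions; you have merely filled in the two facts (minimal elements exist by finiteness, and $\Acal$ is closed under supersets) that the paper leaves implicit.
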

\begin{proof}
  The statement directly follows from the definitions.
\end{proof}

\section{Information decompositions of secret sharing schemes}
\label{sec:info-dcomp-psss}

\citet{WilliamsBeer:Nonneg_Decomposition_of_Multiinformation} proposed to decompose the total mutual information
$I(S;X_{1},\dots,X_{n})$ between a target random variable~$S$ and predictor random variables $X_{1},\dots,X_{n}$
according to different ways in which combinations of the variables $X_{1},\dots,X_{n}$ provide unique, shared or
synergistic information about~$S$.  One of their main ideas is to base such a decomposition on a single \emph{measure of
  shared information}~$I_{\cap}$, which is a function $I(S;Y_{1},\dots,Y_{k})$ that takes as arguments a list of random
variables, of which the first, $S$, takes a special role.  To arrive at a decomposition of $I(S;X_{1},\dots,X_{n})$, the
variables $Y_{1},\dots,Y_{k}$ are taken to be combinations $X_{A} = (X_{i})_{i\in A}$ of $X_{1},\dots,X_{n}$,
corresponding to subsets $A$ of $\{1,\dots,n\}$.  For simplicity, $I_{\cap}(S;X_{A_{1}},\dots,X_{A_{k}})$ is denoted by
$I_{\cap}(S;A_{1},\dots,A_{k})$ for all $A_{1},\dots,A_{k}\subseteq\{1,\dots,n\}$.

Williams and Beer proposed a list of axioms that such a measure $I_{\cap}$ should satisfy.  It follows from these axioms
that it suffices to consider the function $I_{\cap}(S; A_{1},\dots,A_{k})$ in the case that $(A_{1},\dots,A_{k})$ is an
antichain.  Moreover, $I_{\cap}(S;\;\cdot)$ is a monotone function on the partial information lattice
(Definition~\ref{def:PI-lattice}).
Thus it is natural to write each value $I_{\cap}(S; A_{1},\dots,A_{k})$ on the lattice as a sum of local terms
$I_{\partial}$ corresponding to the antichains that lie below $(A_{1},\dots,A_{k})$ in the lattice:
\begin{equation*}
  I_{\cap}(S; A_{1},\dots,A_{k}) = \sum_{(B_{1},\dots,B_{l})\preceq(A_{1},\dots,A_{k})}I_{\partial}(S; B_{1},\dots,B_{l}).
\end{equation*}
The terms $I_{\partial}$ are called \emph{partial information terms}.  This representation always exists, and the
partial information terms are uniquely defined (using a Möbius inversion).  However, it is not guaranteed that
$I_{\partial}$ is always nonnegative.  If $I_{\partial}$ is nonnegative, then $I_{\cap}$ is called \emph{locally
  positive}.

Williams and Beer also defined a function denoted by $I_{\min}$ that satisfies their axioms and that is locally
positive.  While the framework is intriguing and has attracted a lot of further research (as this special issue
illustrates), the function $I_{\min}$ has been critiziced as not measuring the right thing.  The difficulty of finding a
reasonable measure of shared information that is locally
positive~\citep{BROJ13:Shared_information,RBOJ14:Reconsidering_unique_information} has led some to argue that maybe
local positivity is not a necessary requirement for an information decomposition.  This issue is discussed further in
Section~\ref{sec:discussion}.

The goal of this section is to present additional natural properties for a measure of shared information that relate
secret sharing with the intuition behind information decompositions.
In a perfect secret sharing scheme, any combination of participants knows either nothing or everything about~$S$.  This
motivates the following definition:

\begin{Definition}
  A measure of shared information $I_{\cap}$ has the \emph{secret sharing property} if and only if for any access
  structure $\Acal$ and any perfect secret sharing scheme $(X_{1},\dots,X_{n},S)$ with access structure~$\Acal$, the
  following holds:
  \begin{equation*}
    I_{\cap}(S;A_{1},\dots,A_{k}) =
    \begin{cases}
      H(S), & \text{ if }A_{1},\dots,A_{k}\text{ are all authorized}, \\
      0, & \text{ otherwise,}
    \end{cases}
\ifEntropy
    \qquad
    \text{for all $A_{1},\dots,A_{k}\subseteq\{1,\dots,n\}$.}
\fi
  \end{equation*}
\ifEntropy\else
  for any $A_{1},\dots,A_{k}\subseteq\{X_{1},\dots,X_{n}\}$.
\fi
\end{Definition}
\begin{Lemma}
  \label{lem:WB-implies-ssp}
  The secret sharing property is implied by the Williams and Beer axioms.
\end{Lemma}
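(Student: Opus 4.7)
The plan is to verify the two cases of the secret sharing property separately, in each case using perfection to pin the mutual information $I(S;X_A)$ to either $0$ or $H(S)$, and then propagating this value through the Williams--Beer axioms of symmetry, self-redundancy, and monotonicity (with the equality clause for redundant arguments). The unauthorized case is immediate: if some $A_j \notin \Acal$, perfection gives $I(S;X_{A_j}) = 0$, so self-redundancy yields $I_\cap(S;A_j) = 0$; monotonicity (adding arguments can only decrease $I_\cap$), together with the nonnegativity of $I_\cap$ implicit in the framework, then forces $I_\cap(S;A_1,\dots,A_k) = 0$.

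For the authorized case, the upper bound is entirely symmetric: each $A_i$ is authorized, so $I_\cap(S;A_i) = I(S;X_{A_i}) = H(S)$ by self-redundancy, and monotonicity delivers $I_\cap(S;A_1,\dots,A_k) \le H(S)$. The matching lower bound is the crux, and my plan is to adjoin $S$ itself as an additional argument. Since $S$ is a deterministic function of each $X_{A_i}$ (authorization), each $X_{A_i}$ is informationally a ``superset'' of $S$, so the equality clause of monotonicity strips the $X_{A_i}$'s from the augmented argument list one at a time, collapsing it to $S$ alone:
\begin{equation*}
  I_\cap(S;A_1,\dots,A_k,S) \;=\; I_\cap(S;S) \;=\; H(S).
\end{equation*}
Combining this with the plain monotonicity inequality $I_\cap(S;A_1,\dots,A_k,S) \le I_\cap(S;A_1,\dots,A_k)$ yields the lower bound $I_\cap(S;A_1,\dots,A_k) \ge H(S)$, matching the upper bound.

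The main obstacle I anticipate is the step of adjoining $S$ to the argument list together with the appeal to the informational---rather than strictly set-theoretic---reading of the monotonicity equality clause. Williams and Beer state their axioms for subsets of the predictor indices $\{1,\dots,n\}$, so either one interprets them in the natural extension to arbitrary random variables (standard practice throughout the PID literature, and implicit in the present paper's treatment), or one formally enlarges the scheme by an auxiliary index $n+1$ with $X_{n+1} := S$. The enlarged family is still a perfect secret sharing scheme, now with $\{n+1\}$ as an authorized singleton of the extended access structure, and the preceding reduction then runs entirely within the indexed framework, completing the proof.
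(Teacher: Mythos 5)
Your proposal is correct and follows essentially the same route as the paper's own proof: the unauthorized case via $I_\cap(S;A_1,\dots,A_k)\le I(S;A_j)=0$, and the authorized case by adjoining $S$ as an extra argument and using the equality clause of monotonicity to collapse $I_\cap(S;A_1,\dots,A_k,S)$ to $I_\cap(S;S)=H(S)$. Your extra care about formally justifying the adjunction of $S$ (informational reading of the monotonicity clause, or enlarging the scheme by $X_{n+1}:=S$) is a detail the paper leaves implicit, but it does not change the argument.
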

\begin{proof}
  The Williams and Beer axioms imply that
  \begin{equation*}
    I_{\cap}(S;A_{1},\dots,A_{k}) \le I(S;A_{i}) = 0
  \end{equation*}
  whenever $A_{i}$ is not authorized.  On the other hand, when $A_{1},\dots,A_{k}$ are all authorized, then the
  monotonicity axiom implies
  \begin{equation*}
    I_{\cap}(S;A_{1},\dots,A_{k}) \ge
    I_{\cap}(S;A_{1},\dots,A_{k},S) = I_{\cap}(S;S) = H(S).
    \qedhere
  \end{equation*}
\end{proof}
Perfect secret sharing schemes lead to information decompositions with a single nonzero partial information term:
\begin{Lemma}
  \label{lem:local-psss-terms}
  If $I_{\cap}$ has the secret sharing property and if $(X_{1},\dots,X_{n},S)$ is a perfect secret sharing scheme with
  access structure~$\Acal$, then
  \begin{equation}
    \label{eq:Ipartial}
    I_{\partial}(S;A_{1},\dots,A_{k}) = 
    \begin{cases}
      H(S), & \text{ if }\ul{\Acal} = \{A_{1},\dots,A_{k}\}, \\
      0, & \text{ otherwise,}
    \end{cases}
\ifEntropy
    \qquad
    \text{for all $A_{1},\dots,A_{k}\subseteq\{1,\dots,n\}$.}
\fi
  \end{equation}
\ifEntropy\else
  for any $A_{1},\dots,A_{k}\subseteq\{X_{1},\dots,X_{n}\}$.
\fi
\end{Lemma}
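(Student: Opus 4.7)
The plan is to combine the defining Möbius relation between $I_{\cap}$ and $I_{\partial}$ with Lemma~\ref{lem:lattice-authorized}, which translates the ``authorized'' condition into a purely lattice-theoretic statement about the partial information lattice.

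First, I would rewrite the hypothesis in lattice language. By Lemma~\ref{lem:lattice-authorized}, $A_{1},\dots,A_{k}$ are all authorized for $\mathcal{A}$ if and only if $\ul\mathcal{A}\preceq(A_{1},\dots,A_{k})$. Hence, assuming $I_{\cap}$ has the secret sharing property, the whole function $I_{\cap}(S;\,\cdot\,)$ on the partial information lattice is simply
\begin{equation*}
  I_{\cap}(S;A_{1},\dots,A_{k}) = H(S)\cdot\mathbf{1}\bigl[\ul\mathcal{A}\preceq (A_{1},\dots,A_{k})\bigr],
\end{equation*}
that is, $H(S)$ times the indicator function of the principal filter generated by $\ul\mathcal{A}$.

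Second, I would simply check that the candidate values in \eqref{eq:Ipartial} reconstruct this $I_{\cap}$ through the Möbius relation. Substituting $I_{\partial}(S;B_{1},\dots,B_{l})=H(S)\cdot\mathbf{1}[\{B_{1},\dots,B_{l}\}=\ul\mathcal{A}]$ into
\begin{equation*}
  \sum_{(B_{1},\dots,B_{l})\preceq(A_{1},\dots,A_{k})} I_{\partial}(S;B_{1},\dots,B_{l}),
\end{equation*}
only the term corresponding to the antichain $\ul\mathcal{A}$ contributes, and it contributes $H(S)$ precisely when $\ul\mathcal{A}\preceq(A_{1},\dots,A_{k})$. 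This is exactly the expression for $I_{\cap}$ displayed above.

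Finally, since the relation $I_{\cap}=\sum I_{\partial}$ uniquely determines $I_{\partial}$ by Möbius inversion (as noted in the text preceding this lemma), the candidate values must coincide with the true partial information terms, finishing the proof. There is no genuine obstacle here: once Lemma~\ref{lem:lattice-authorized} is applied, the result is a one-line verification that the proposed $I_{\partial}$ inverts to the prescribed $I_{\cap}$.
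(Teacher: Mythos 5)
Your proposal is correct and follows essentially the same route as the paper's own proof: define the candidate partial terms, sum them over the down-set, invoke Lemma~\ref{lem:lattice-authorized} to identify the result with the prescribed $I_{\cap}$, and conclude by uniqueness of the Möbius inversion. No further comment is needed.
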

\begin{proof}
  Suppose that $\ul{\Acal} = \{A'_{1},\dots,A'_{k'}\}$, and let $J_{\partial}(S;A_{1},\dots,A_{k})$ be the right hand
  side of~\eqref{eq:Ipartial}.  We need to show that $I_{\partial} = J_{\partial}$.  Since the Möbius inversion is
  unique, it suffices to show that $J_{\cap} = I_{\cap}$, where
  \begin{equation*}
    J_{\cap}(S;A_{1},\dots,A_{k})
    = \sum_{(B_{1},\dots,B_{l})\preceq(A_{1},\dots,A_{k})} J_{\partial}(S;B_{1},\dots,B_{l}).
  \end{equation*}
  By Lemma~\ref{lem:lattice-authorized},
  \begin{equation*}
    J_{\cap}(S;A_{1},\dots,A_{k}) =
    \begin{cases}
      H(S), & \text{ if }A_{1},\dots,A_{k}\text{ are all authorized}, \\
      0, & \text{ otherwise,}
    \end{cases}
  \end{equation*}
  for any $A_{1},\dots,A_{k}\subseteq\{X_{1},\dots,X_{n}\}$, from which the claim follows.
\end{proof}

What happens when we have several secret sharing schemes involving the same participants?  In order to have a clear
intuition, assume that the secret sharing schemes satisfy the following definition:
\begin{Definition}
  \label{def:sss_combi}
  Let $\Acal_{1},\dots,\Acal_{l}$ be access structures on $\{1,\dots,n\}$.  A \emph{combination of (perfect) secret
    sharing schemes} with access structures~$\Acal_{1},\dots,\Acal_{l}$ consists of random variables $S_{1}, \dots,
  S_{l}$, $X_{1},\dots,X_{n}$ such that $(S_{i},X_{1},\dots,X_{n})$ is a (perfect) secret sharing scheme with access
  structure $\Acal_{i}$ for $i=1,\dots,l$ and such that
  \begin{equation*}
    H(S_{i}|S_{1},\dots,S_{i-1},S_{i+1},\dots,S_{l},X_{A}) =
    H(S_{i}) \text{ if } A\notin\Acal_{i}. 
  \end{equation*}
\end{Definition}
This definition ensures that the secrets are independent in the sense that knowing some of the secrets provides no
information about the other secrets.  Formally, one can see that the secrets are probabilistically independent as
follows: For any $A\notin\Acal_{i}$ (for example, $A=\emptyset$),
\begin{equation*}
  H(S_{i}|S_{1},\dots,S_{i-1},S_{i+1},\dots,S_{l}) \ge
  H(S_{i}|S_{1},\dots,S_{i-1},S_{i+1},\dots,S_{l},X_{A}) =
  H(S_{i}).
\end{equation*}

In Definition~\ref{def:sss_combi}, if two access structures $\Acal_{i},\Acal_{j}$ are identical, then we can replace
$S_{i}$ and $S_{j}$ by a single random variable $(S_{i},S_{j})$ and obtain a smaller combination of (perfect) secret
sharing schemes.

In a combination of perfect secret sharing schemes, it is very clear who knows what: Namely, a group of participants
knows all secrets for which it is authorized, while it knows nothing about the remaining secrets.  This motivates the
following definition:
\begin{Definition}
  \label{def:comb-ssp}
  A measure of shared information $I_{\cap}$ has the \emph{combined secret sharing property} if and only if for any
  combination of perfect secret sharing schemes with access structures $\Acal_{1},\dots,\Acal_{l}$,
  \begin{equation}
    \label{eq:comb-ssp}
    I_{\cap}\big((S_{1},\dots,S_{l});A_{1},\dots,A_{k}\big) =
    H\big(\{ S_{i} : A_{1},\dots,A_{k}\in\Acal_{i}\}\big)
  \end{equation}
  (the entropy of those secrets for which $A_{1},\dots,A_{k}$ are all authorized).
  $I_{\cap}$ has the \emph{pairwise secret sharing property} if and only if the same holds true in the special
  case~$l=2$.
\end{Definition}
The combined secret sharing property implies the pairwise secret sharing property.  The pairwise secret sharing property
does not follow from the Williams and Beer axioms.  For example, 
$I_{\min}$ satisfies the Williams and Beer axioms, but not the pairwise secret sharing property (as will become apparent
in Theorem~\ref{thm:incompatibility}).
So one can ask whether the pairwise and combined secret sharing properties are compatible with the Williams and Beer
axioms.  This question is difficult to answer, since currently there are only two proposed measures of shared
information that satisfy the Williams and Beer axioms, namely $I_{\min}$ and the \emph{minimum of mutual informations}
\citep{Barrett2014:Gaussian_information_decomposition}
\begin{equation*}
  I_{\text{MMI}}(S;A_{1},\dots,A_{k}) := \min_{i=1,\dots,k} I(S;A_{i}).
\end{equation*}
Both measures do not satisfy the pairwise secret sharing property.

While there has been no further proposal for a function that satisfies the Williams and Beer axioms for arbitrarily many
arguments, several measures have been proposed for the ``bivariate case'' $k=2$, notably $I_{\text{red}}$ of
\citet{HarderSalgePolani2013:Bivariate_redundancy} and $\widetilde{SI}$ of
\cite{BROJA13:Quantifying_unique_information}.  The appendix shows that $\widetilde{SI}$ at least satisfies the combined
secret sharing property ``as far as possible.''

\medskip%
Combinations of $l$ perfect secret sharing schemes lead to information decompositions with at most $l$ nonzero partial
information terms.
\begin{Lemma}
  \label{lem:partial-perfect-combination}
  Assume that $I_{\cap}$ has the combined secret sharing property.  If $(S_{1},\dots,S_{l}$, $X_{1},\dots,X_{n})$ is a
  combination of perfect secret sharing schemes with pairwise different
  access structures~$\Acal_{1},\dots,\Acal_{l}$, then
  \begin{equation*}
    I_{\partial}\big((S_{1},\dots,S_{l});A_{1},\dots,A_{k}\big) =
    \begin{cases}
      H(S_{i}), & \text{ if }\ul{\Acal_{i}} = \{A_{1},\dots,A_{k}\}
      \ifEntropy\else\\&\qquad\qquad\fi \text{ for some }i\in\{1,\dots,l\}, \\
      0, & \text{ otherwise,}
    \end{cases}
  \end{equation*}
  for any $A_{1},\dots,A_{k}\subseteq\{X_{1},\dots,X_{n}\}$.
\end{Lemma}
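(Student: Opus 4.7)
The plan is to mimic the argument of Lemma~\ref{lem:local-psss-terms}: define a candidate $J_{\partial}$ equal to the right-hand side of the claim, show that its Möbius sum $J_{\cap}$ agrees with $I_{\cap}$ as given by the combined secret sharing property, and then invoke uniqueness of Möbius inversion on the lattice to conclude $J_{\partial}=I_{\partial}$.

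First I would note that the assumption that $\Acal_{1},\dots,\Acal_{l}$ are \emph{pairwise different} guarantees that the value $J_{\partial}((S_{1},\dots,S_{l});A_{1},\dots,A_{k})$ is well-defined, since at most one index~$i$ can satisfy $\ul{\Acal_{i}}=\{A_{1},\dots,A_{k}\}$. Next I would record the independence of the secrets: by the argument immediately following Definition~\ref{def:sss_combi}, $S_{1},\dots,S_{l}$ are mutually independent, so for any subset $T\subseteq\{1,\dots,l\}$ one has $H\big(\{S_{i}:i\in T\}\big)=\sum_{i\in T}H(S_{i})$.

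The core computation then proceeds as follows. For a fixed antichain $(A_{1},\dots,A_{k})$, set
\begin{equation*}
  J_{\cap}\big((S_{1},\dots,S_{l});A_{1},\dots,A_{k}\big)
  = \sum_{(B_{1},\dots,B_{m})\preceq(A_{1},\dots,A_{k})} J_{\partial}\big((S_{1},\dots,S_{l});B_{1},\dots,B_{m}\big).
\end{equation*}
By the definition of $J_{\partial}$, only the antichains of the form $(B_{1},\dots,B_{m})=\ul{\Acal_{i}}$ contribute, and each such contribution is $H(S_{i})$. Thus
\begin{equation*}
  J_{\cap}\big((S_{1},\dots,S_{l});A_{1},\dots,A_{k}\big)
  = \sum_{i:\,\ul{\Acal_{i}}\preceq(A_{1},\dots,A_{k})} H(S_{i}).
\end{equation*}
Applying Lemma~\ref{lem:lattice-authorized}, $\ul{\Acal_{i}}\preceq(A_{1},\dots,A_{k})$ is equivalent to all of $A_{1},\dots,A_{k}$ being authorized for~$\Acal_{i}$. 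Using independence of the secrets, the sum equals $H(\{S_{i}:A_{1},\dots,A_{k}\in\Acal_{i}\})$, which by the combined secret sharing property is exactly $I_{\cap}((S_{1},\dots,S_{l});A_{1},\dots,A_{k})$.

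Hence $J_{\cap}=I_{\cap}$, and by the uniqueness of the Möbius inversion on the partial information lattice we conclude $J_{\partial}=I_{\partial}$, which is the claim. I do not foresee a genuine obstacle; the only subtle point is to invoke the pairwise-difference assumption to make $J_{\partial}$ well-defined, and to use the independence of the secrets (already built into Definition~\ref{def:sss_combi}) to collapse the joint entropy into a sum of individual entropies.
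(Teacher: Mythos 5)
Your proof is correct and follows exactly the route the paper intends: the paper omits the proof, stating only that it is ``similar to the proof of Lemma~\ref{lem:local-psss-terms},'' and your argument is precisely that template (define $J_{\partial}$ as the claimed right-hand side, compute $J_{\cap}$ via Lemma~\ref{lem:lattice-authorized} and the mutual independence of the secrets, match it to the combined secret sharing property, and conclude by uniqueness of the Möbius inversion). Your two added observations---that pairwise distinctness of the access structures makes $J_{\partial}$ well-defined, and that the independence noted after Definition~\ref{def:sss_combi} collapses the joint entropy into a sum---are exactly the points that need checking beyond Lemma~\ref{lem:local-psss-terms}.
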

The proof is similar to the proof of Lemma~\ref{lem:local-psss-terms} and omitted.

The combined secret sharing property implies that any combination of nonnegative values can be prescribed as partial
information values.
\begin{Proposition}
  \label{prop:prescribing}
  Suppose that a nonnegative number $h_{\Acal}$ is given for any antichain~$\Acal$.  For any measure of shared
  information that satisfies the combined secret sharing property, there exist random variables~$S,X_{1},\dots,X_{n}$
  such that the corresponding partial measure $I_{\partial}$ satisfies $I_{\partial}(S;A_{1},\dots,A_{k}) =
  h_{A_{1},\dots,A_{k}}$ for all antichains $\Acal=(A_{1},\dots,A_{k})$.
\end{Proposition}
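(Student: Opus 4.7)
The plan is to reduce the statement directly to Lemma~\ref{lem:partial-perfect-combination} by constructing a combination of perfect secret sharing schemes whose secret entropies realize the prescribed values $h_{\Acal}$.

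First I would index the family by the antichains $\Acal_{1},\dots,\Acal_{l}$ for which $h_{\Acal_{i}}>0$ (antichains with $h_{\Acal}=0$ contribute nothing and can be ignored). For each such $\Acal_{i}$, Theorem~\ref{thm:psss-existence} furnishes a perfect secret sharing scheme $(S^{(i)},X^{(i)}_{1},\dots,X^{(i)}_{n})$ whose access structure has $\Acal_{i}$ as its set of inclusion-minimal elements and whose secret satisfies $H(S^{(i)})=h_{\Acal_{i}}$. I would then choose these $l$ schemes to be stochastically independent (this is always possible, e.g.\ by taking product distributions on disjoint probability spaces) and define the composite shares $X_{j}:=(X^{(1)}_{j},\dots,X^{(l)}_{j})$ for each participant $j$, together with the tuple of secrets $(S^{(1)},\dots,S^{(l)})$.

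Next I would verify that this composite object is a combination of perfect secret sharing schemes in the sense of Definition~\ref{def:sss_combi}. Each $(S^{(i)},X_{1},\dots,X_{n})$ is a perfect secret sharing scheme with access structure generated by $\Acal_{i}$, because the extra coordinates $X^{(j)}_{\bullet}$ for $j\ne i$ are independent of $S^{(i)}$ and therefore add no information. The independence condition on the secrets in Definition~\ref{def:sss_combi} follows for the same reason: for $A$ not authorized for $\Acal_{i}$, the variables $\{S^{(j)}\}_{j\ne i}$ and $\{X^{(j)}_{A}\}_{j\ne i}$ are independent of scheme $i$, so they leave the conditional entropy of $S^{(i)}$ unchanged, while $X^{(i)}_{A}$ itself leaves $H(S^{(i)})$ unchanged by perfection of the $i$-th scheme.

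Finally, setting $S:=(S^{(1)},\dots,S^{(l)})$ and applying Lemma~\ref{lem:partial-perfect-combination} yields
\[
I_{\partial}(S;A_{1},\dots,A_{k})=\begin{cases}H(S^{(i)})=h_{\Acal_{i}},&\text{if }\{A_{1},\dots,A_{k}\}=\Acal_{i}\text{ for some }i,\\ 0,&\text{otherwise,}\end{cases}
\]
which is precisely $h_{A_{1},\dots,A_{k}}$ for every antichain (the zero entries being covered by the antichains we omitted). The only subtle step is the justification of the independence hypothesis in Definition~\ref{def:sss_combi}, which I expect to be the main (and still quite mild) obstacle, since it requires checking that mixing in the other secrets and the other participants' non-authorized shares does not inadvertently reveal $S^{(i)}$; this is handled cleanly by the product construction described above.
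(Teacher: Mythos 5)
Your proposal is correct and follows essentially the same route as the paper: build one perfect secret sharing scheme per antichain via Theorem~\ref{thm:psss-existence}, combine independent copies into composite shares, and invoke Lemma~\ref{lem:partial-perfect-combination}. Your version is if anything slightly more careful than the paper's, since you explicitly drop the antichains with $h_{\Acal}=0$ (Theorem~\ref{thm:psss-existence} only covers $h>0$) and explicitly verify the independence condition of Definition~\ref{def:sss_combi}, both of which the paper leaves implicit.
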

\begin{proof}
  By Theorem~\ref{thm:psss-existence}, for each antichain $\Acal$ there exists a perfect secret sharing scheme
  $S_{\Acal}$, $X_{1,\Acal},\dots,X_{n,\Acal}$ with $H(S_{\Acal}) = h_{\Acal}$.  Combine independent copies of these
  perfect secret sharing schemes and let
  \begin{equation*}
    S = (S_{\Acal})_{\Acal}, \quad X_{1} = (X_{1,\Acal})_{\Acal}, \quad \dots, \quad X_{n} = (X_{n,\Acal})_{\Acal},
  \end{equation*}
  where $\Acal$ runs over all antichains.  Then $S,X_{1},\dots,X_{n}$ is an independent combination of perfect secret
  sharing schemes, and the statement follows from Lemma~\ref{lem:partial-perfect-combination}.
\end{proof}

Unfortunately, not every random variable $S$ can be decomposed in such a way as a combination of secret sharing schemes.
However, Proposition~\ref{prop:prescribing} suggests that, given a measure $I_{\cap}$ of shared information that
satisfies the combined secret sharing property, $I_{\partial}(S;\ul\Acal)$ can informally be interpreted as a
measure that quantifies how much $(X_{1},\dots,X_{n},S)$ looks like a perfect secret sharing scheme with access
structure~$\Acal$.

\begin{Lemma}
  \label{lem:weak-identity}
  Suppose that $I_{\cap}$ is a measure of shared information that satisfies the pairwise secret sharing property.
  If $X_{1}$ and $X_{2}$ are independent, then \ifEntropy$\else$$\fi I_{\cap}\big((X_{1},X_{2}); X_{1}, X_{2}\big) = 0\ifEntropy$.\else.$$\fi
\end{Lemma}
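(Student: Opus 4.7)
The plan is to realize $(X_{1},X_{2})$ as the joint secret of a combination of two perfect secret sharing schemes on $n=2$ participants, and then apply the pairwise secret sharing property. Concretely, set $S_{1}:=X_{1}$, $S_{2}:=X_{2}$, and take the access structures $\Acal_{1}=\{\{1\},\{1,2\}\}$ and $\Acal_{2}=\{\{2\},\{1,2\}\}$ on $\{1,2\}$.

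First I would verify that $(S_{1},X_{1},X_{2})$ is a perfect secret sharing scheme for $\Acal_{1}$, and $(S_{2},X_{1},X_{2})$ for $\Acal_{2}$. The authorized sets trivially determine the relevant share; the only nontrivial non-authorized set for $\Acal_{1}$ is $\{2\}$, and independence of $X_{1}$ and $X_{2}$ yields $H(S_{1}|X_{2})=H(X_{1})=H(S_{1})$, with a symmetric computation for $\Acal_{2}$. Next, the coupling condition in Definition~\ref{def:sss_combi} requires $H(S_{1}|S_{2},X_{A})=H(S_{1})$ whenever $A\notin\Acal_{1}$; for $A\in\{\emptyset,\{2\}\}$ this again reduces to $H(X_{1}|X_{2})=H(X_{1})$ by independence, and the case $i=2$ is symmetric. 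Thus $(S_{1},S_{2},X_{1},X_{2})$ is a bona fide combination of perfect secret sharing schemes.

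Finally, applying the pairwise secret sharing property to the antichain $(\{1\},\{2\})$, the right-hand side of~\eqref{eq:comb-ssp} is the entropy of the set of secrets $S_{i}$ for which both $\{1\}$ and $\{2\}$ belong to $\Acal_{i}$. Since $\{2\}\notin\Acal_{1}$ and $\{1\}\notin\Acal_{2}$, this collection is empty, so the right-hand side is $0$, whence $I_{\cap}((X_{1},X_{2});X_{1},X_{2})=0$. The only delicate step is the choice of access structures: they must be picked so that each singleton $\{i\}$ is authorized exactly by $\Acal_{i}$, while the antichain of interest $(\{1\},\{2\})$ is not jointly authorized by either. Both requirements are met by the explicit choice above and rely on the independence hypothesis to secure perfection; no further obstacle is anticipated.
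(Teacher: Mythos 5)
Your proof is correct and follows essentially the same route as the paper: set $S_{1}=X_{1}$, $S_{2}=X_{2}$, view the pair as a combination of perfect secret sharing schemes with access structures generated by $\{1\}$ and $\{2\}$ respectively, and apply the pairwise secret sharing property to the antichain $(\{1\},\{2\})$, which is authorized for neither scheme. You are in fact somewhat more explicit than the paper in checking the perfection and coupling conditions of Definition~\ref{def:sss_combi} from the independence of $X_{1}$ and $X_{2}$.
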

In the language of~\cite{Ince17:Iccs}, the lemma says that the pairwise secret sharing property implies the
\emph{independent identity property}.
\begin{proof}
  Let $S_{1} = X_{1}$, $S_{2} = X_{2}$.  Then $S_{1}, S_{2}, X_{1}, X_{2}$ is a pair of perfect secret sharing schemes
  with access structures $\Acal_{1}=\big\{\{1\}\big\}$ and $\Acal_{2}=\big\{\{2\}\big\}$.  The statement follows from
  Definition~\ref{def:comb-ssp}, since $X_{1}$ is not authorized for~$\Acal_{2}$ and $X_{2}$ is not authorized
  for~$\Acal_{1}$.
\end{proof}

\section{Incompatibility with local positivity}
\label{sec:incomp-result}

Unfortunately, although the combined secret sharing property very much fits the intuition behind the axioms of Williams
and Beer, it is incompatible with a nonnegative decomposition according to the partial information lattice:
\begin{Theorem}
  \label{thm:incompatibility}
  Let $I_{\cap}$ be a measure of shared information that satisfies the Williams-Beer axioms and has the pairwise secret
  sharing property.  Then $I_{\partial}$ is not nonnegative.
\end{Theorem}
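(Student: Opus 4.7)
The plan is to exhibit a concrete three-predictor system $(S;X_1,X_2,X_3)$ for which Lemma~\ref{lem:weak-identity} together with the Williams--Beer axioms pins down enough values of $I_\cap(S;\cdot)$ that Möbius inversion on the partial information lattice is forced to assign a negative value at some antichain.

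Take $X_1,X_2$ independent uniform binary, set $X_3=X_1\oplus X_2$, and let $S=(X_1,X_2)$. The crucial features are: each pair $(X_i,X_j)$ with $i\neq j$ is a pair of independent uniform binary variables (the XOR of two independent fair bits is independent of each summand); and each pair determines all three bits, so that as random variables one has $X_{\{1,2\}}=X_{\{1,3\}}=X_{\{2,3\}}=X_{\{1,2,3\}}=S$. With this, three applications of Lemma~\ref{lem:weak-identity} give $I_\cap(S;X_i,X_j)=0$ whenever $i\neq j$, and WB monotonicity forces $I_\cap(S;X_1,X_2,X_3)=0$. Self-redundancy gives $I_\cap(S;X_i)=1$ and $I_\cap(S;X_{\{i,j\}})=H(S)=2$. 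The equality clause of the WB monotonicity axiom (a larger random variable is redundant in the presence of a smaller one, as used in the proof of Lemma~\ref{lem:WB-implies-ssp}) then gives $I_\cap(S;X_k,X_{\{i,j\}})=I_\cap(S;X_k)=1$ whenever $\{i,j,k\}=\{1,2,3\}$, since $X_k$ is a function of $X_{\{i,j\}}=S$, and by duplicate removal $I_\cap(S;X_{\{1,2\}},X_{\{1,3\}},X_{\{2,3\}})=I_\cap(S;S)=2$.

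Propagating through Möbius inversion from the bottom of the lattice upwards, the antichains $\{\{i\}\}$, $\{\{i\},\{j\}\}$ and $\{\{1\},\{2\},\{3\}\}$ all receive $I_\partial=0$, while each ``mixed'' antichain $\{\{k\},\{i,j\}\}$ receives $I_\partial=1$. Finally consider $\alpha:=\{\{1,2\},\{1,3\},\{2,3\}\}$. A short case check (any strictly lower antichain must contain some subset of each of $\{1,2\},\{1,3\},\{2,3\}$) shows that the antichains strictly below $\alpha$ are exactly the seven just mentioned; their $I_\partial$ values sum to $3$, while $I_\cap(S;\alpha)=2$, so
\[
I_\partial(S;\alpha)\;=\;I_\cap(S;\alpha)\;-\sum_{\beta\prec\alpha}I_\partial(S;\beta)\;=\;2-3\;=\;-1\;<\;0.
\]

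The main obstacle is the combinatorial bookkeeping for the Möbius step: one must enumerate the strictly lower antichains for $\alpha$ and verify that the only nonzero contributions come from the three mixed antichains. Both are routine once one notices the collapse $X_{\{i,j\}}=S$ of all three doubleton projections, which reduces essentially every required $I_\cap$ value to an ordinary mutual information via the equality part of the monotonicity axiom.
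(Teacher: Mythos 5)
Your proposal is correct and is essentially the paper's own proof: the same XOR example (your $S=(X_1,X_2)$ is informationally identical to the paper's $S=(X_1,X_2,X_3)$), the same use of Lemma~\ref{lem:weak-identity} plus the Williams--Beer monotonicity/self-redundancy axioms to pin down the lattice values, and the same Möbius step at the antichain $\{12\}\{13\}\{23\}$ yielding $-1$ bit. The only cosmetic differences are that you determine $I_{\cap}(S;\{12\}\{13\}\{23\})=2$ exactly where the paper only needs the upper bound $\le 2$, and that "the seven just mentioned" should read as the seven antichains actually below $\alpha$ (the three singletons are not among them, as your own parenthetical criterion shows).
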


\begin{proof}
  The XOR example, which was already used by~\citet{BROJ13:Shared_information} and
  \citet{RBOJ14:Reconsidering_unique_information} to prove incompatibility results for properties of information
  decompositions, can also be used here.

  Let $X_{1},X_{2}$ be independent binary uniform random variables, let $X_{3}=X_{1}\oplus X_{2}$, and let $S =
  (X_{1},X_{2},X_{3})$.  Observe that the situation is symmetric in $X_{1},X_{2},X_{3}$.  In particular,
    $X_{2},X_{3}$ are also independent, and $X_{1}=X_{2}\oplus X_{3}$.  The following values of $I_{\cap}$ can be
  computed from the assumptions:
  \begin{itemize}
  \item $I_{\cap}\big(S;X_{1},(X_{2}X_{3})\big) = I_{\cap}\big(S; X_{1},(X_{1}X_{2}X_{3})\big) = I_{\cap}(S;X_{1}) = \SI{1}{bit}$, since $X_{1}$ is a function of
    $(X_{2},X_{3})$ and by the monotonicity axiom.
  \item $I_{\cap}(S;X_{1},X_{2}) = I_{\cap}\big((X_{1}X_{2}X_{3});X_{1},X_{2}\big) = I_{\cap}\big((X_{1}X_{2});X_{1},X_{2}\big) = 0$ by
    Lemma~\ref{lem:weak-identity}.
  \end{itemize}
  By monotonicity, $I_{\cap}(S;X_{1},X_{2},X_{3}) = 0$.  Moreover,
  \ifEntropy$\else$$\fi I_{\cap}\big(S;(X_{1}X_{2}),(X_{1}X_{3}),(X_{2}X_{3})\big)\le\SI{2}{bit}\ifEntropy$,\else,$$\fi{} since \SI{2}{bit} is the total entropy in the
  system.  But then
  \begin{multline*}
    I_{\partial}\big(S;(X_{1}X_{2}),(X_{1}X_{3}),(X_{2}X_{3})\big)
    = I_{\cap}\big(S;(X_{1}X_{2}),(X_{1}X_{3}),(X_{2}X_{3})\big)\\
    \shoveright{
    - I_{\cap}\big(S;X_{1},(X_{2}X_{3})\big) - I_{\cap}\big(S;X_{2},(X_{1}X_{3})\big) - I_{\cap}\big(S;X_{3},(X_{1}X_{2})\big)
    \pm 0
    }
    \\
    \le \SI{2}{bit} - \SI{3}{bit} = \SI{-1}{bit},
  \end{multline*}
  where $\pm0$ denotes values of $I_{\cap}$ that vanish.  Thus, $I_{\partial}$ is not nonnegative.
\end{proof}
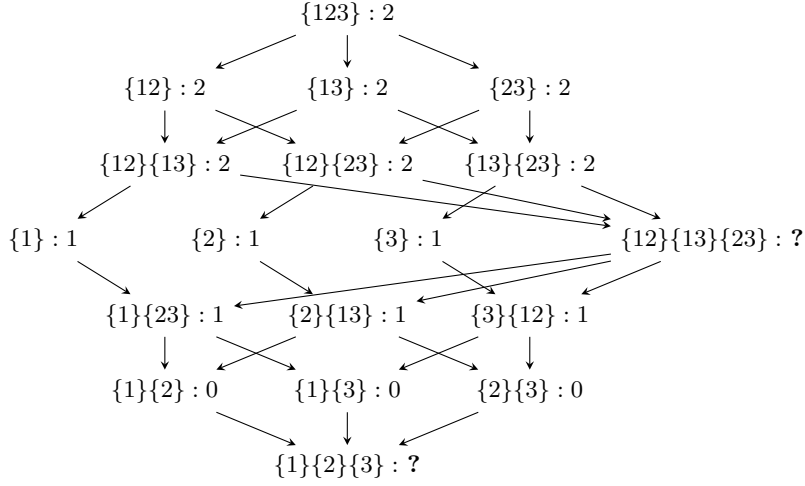
\begin{figure}
  \begin{center}
    \begin{tikzpicture}[xscale=1.6, yscale=1] %
      \PIdrei{\small $\{123\}:2$}
      {\small $\{12\}:2$}
      {\small $\{13\}:2$}
      {\small $\{23\}:2$}
      {\small $\{12\}\{13\}:2$}
      {\small $\{12\}\{23\}:2$}
      {\small $\{13\}\{23\}:2$}
      {\small $\{1\}:1$}
      {\small $\{2\}:1$}
      {\small $\{3\}:1$}
      {\small $\{12\}\{13\}\{23\}:$ \textbf{?}}
      {\small $\{1\}\{23\}:1$}
      {\small $\{2\}\{13\}:1$}
      {\small $\{3\}\{12\}:1$}
      {\small $\{1\}\{2\}:0$}
      {\small $\{1\}\{3\}:0$}
      {\small $\{2\}\{3\}:0$}
      {\small $\{1\}\{2\}\{3\}:$ \textbf{?}}
    \end{tikzpicture}
  \end{center}
  \caption{ The partial information lattice for~$n=3$.  Each node is indexed by an antichain.
    The values (in bit) of the shared information in the XOR example from the proof of Theorem~\ref{thm:incompatibility}
    according to the pairwise secret sharing property are given after the colon. }
  \label{fig:xor3}
\end{figure}
Note that the random variables $(S=(X_{1},X_{2},X_{3}),X_{1},X_{2},X_{3})$ from the proof of
Theorem~\ref{thm:incompatibility} form three perfect secret sharing schemes that do not satisfy the definition of a
combination of perfect secret sharing schemes.  The three secrets $X_{1},X_{2},X_{3}$ are not independent, but they are
pair-wise independent (and so Lemma~\ref{lem:partial-perfect-combination} does not apply).

\begin{Remark}
  The XOR example from the proof of Theorem~\ref{thm:incompatibility} (which was already used by
  \citet{BROJ13:Shared_information} and \citet{RBOJ14:Reconsidering_unique_information}) was criticized by
  \citet{ChicharroPanzeri17:Dual_Decompositions} on the grounds that it involves random variables that stand in a
  deterministic functional relation (in the sense that~$X_{3} = X_{1}\oplus X_{2}$).  Chicharro and Panzeri argue that
  in such a case it is not appropriate to use the full partial information lattice.  Instead, the functional
  relationship should be used to eliminate (or identify) nodes from the lattice.  Thus, while the monotonicity axiom of
  Williams and Beer implies $I_{\cap}(S;X_{3},(X_{2},X_{3})) = I_{\cap}(S;X_{3})$ (and so $\{3;23\}$ is not part of the
  partial information lattice), the same axiom also implies that $I_{\cap}(S;X_{3},(X_{1},X_{2})) = I_{\cap}(S;X_{3})$
  in the XOR example, and so $\{3;12\}$ should similarly be excluded from the lattice when analyzing this particular
  example.  But note that the first argument is a formal argument that is valid for all joint distributions of
  $S,X_{1},X_{2},X_{3}$, while the second argument takes into account the particular underlying distribution.

  It is easy to work around this objection.  The deterministic relationship disappears when an arbitrarily small
  stochastic noise is added to the joint distribution.  To be precise, let $X_{1},X_{2}$ be independent binary random
  variables, and let $X_{3}$ be binary with
  \begin{equation*}
    P(X_{3}=x_{3}|X_{1}=x_{1},X_{2}=x_{2}) =
    \begin{cases}
      1 - \epsilon, & \text{ if }x_{3} = x_{1}\oplus x_{2}, \\
      \epsilon, & \text{ otherwise,}
    \end{cases}
  \end{equation*}
  for $0\le\epsilon\le1$.  For $\epsilon=0$, the example from the proof is recovered.
  Assuming that the partial information terms depend continuously on this joint distribution, the partial information
  term $I_{\partial}\big(S;(X_{1}X_{2}),(X_{1}X_{3}),(X_{2}X_{3})\big)$ will still be negative for small~$\epsilon>0$.
  Thus, assuming continuity, the conclusion of Theorem~\ref{thm:incompatibility} still holds true when the information
  decomposition according to the full partial information lattice is only considered for random variables that do not
  satisfy any functional deterministic constraint.
\end{Remark}
\begin{Remark}
  Analyzing the proof of Theorem~\ref{thm:incompatibility}, one sees that the independent identity axiom
  (Lemma~\ref{lem:weak-identity}) is the main ingredient to arrive at the contradiction.  The same property also arises
  in the other uses of the XOR example \citep{BROJ13:Shared_information,RBOJ14:Reconsidering_unique_information}.
\end{Remark}

\section{Discussion}
\label{sec:discussion}

Perfect secret sharing schemes correspond to systems of random variables in which it is very clearly
specified ``who knows what.''  In such a system, it is easy to assign intuitive values to the shared information nodes
in the partial information lattice, and one may conjecture that the intuition behind this assignment is the same
intuition that underlies the Williams and Beer axioms, which define the partial information lattice.  Moreover,
following the same intuition, independent combinations of perfect secret sharing schemes can be used as a tool to
construct systems of random variables with prescribable (nonnegative) values of partial information.

Unfortunately, this extension to independent combinations of perfect secret sharing schemes is not without problems: By
Theorem~\ref{thm:incompatibility}, it leads to decompositions with negative partial information terms.  But what does it
mean that the examples derived from the same intuition as the Williams and Beer axioms contradict the same axioms in
this way?  Is this an indication that the whole idea of information decomposition does not work (and that the question
posed in the first paragraph of the introduction cannot be answered affirmatively)?

There are several ways out of this dilemma.
The first solution is to assign different values to combinations of perfect secret sharing schemes.  This solution will
not be pursued further in this text, as it would change the interpretation of the information decomposition as measuring
``who knows what.''

The second solution is to accept negative partial values in the information decomposition.  It has been argued that
negative values of information can be given an intuitive interpretation in terms of confusing or misleading information.
For event-wise (also called ``local'') information quantities, such as the event-wise mutual information
$i(s;x) = \log(p(s)/p(s|x))$, this interpretation goes back to the early days of information
theory~\cite{Fano61:Transmission_of_information}.  Sometimes, this phenomenon is 
called ``misinformation'' \citep{Ince17:Iccs,WibralLizierPriesemann14:Bits_from_Biology}.  However, in the usual
language, misinformation refers to ``false or incorrect information, especially when it is intended to trick
someone''~\cite{Macmillan:Dictionary}, which is not the effect that is modelled here.  Thus, the word misinformation
should be avoided, in order not to mislead the reader into the wrong intuition.

While negative event-wise information quantities are well-understood, the situation is more problematic for average
quantities.  When an agent receives side-information in the form of the value $x$ of a relevant random variable~$X$, she
changes her strategy.  While the prior strategy should be based on the prior distribution~$p(S)$, the new strategy
should be based on the posterior $p(S|X=x)$.  Clearly, in a probabilistic setting, any change of strategy can lead to a
better or worse result in a single instance.  On average, though, side-information never hurts (and it is never
advantageous on average to ignore side-information), which is why the mutual information is never negative.  Similarly,
it is natural to expect non-negativity of other information quantities.  It is difficult to imagine how correct
side-information (or an aspect thereof) can be misleading on average.  The situation is different for incorrect
information, where the interpretation of a negative value is much easier.

More conceptually, I would suspect that an (averaged) information quantity that may change its sign actually conflates different
aspects of information\footnote{One can argue whether the same should be true for event-wise quantities.  Recently, \cite{Ince17:PED} suggested to also write the event-wise mutual information as a difference of non-negative quantities.}, just as the interaction information (or co-information) conflates synergy and
redundancy~\cite{WilliamsBeer:Nonneg_Decomposition_of_Multiinformation}.

In any case, allowing negative partial values alters the interpretation of an information decomposition to a point where
it is questionable whether the word ``decomposition'' is still appropriate.
When decomposing an object into parts, the parts should in some reasonable way be sub-objects.  For example, in a
Fourier decomposition of a function, the Fourier components are never larger than the function (in the sense of the
$L^{2}$-norm), and the sum of the squared $L^{2}$-norms of the Fourier coefficients equals the squared $L^{2}$-norm of
the original function.
As another example, given a (positive) amount of money and two investment options, it may indeed be possible to invest a
negative share of the total amount into one of the two options in order to increase the funds that can be invested in
the second option.  However, such short selling is regulated in many countries with much stronger rules than ordinary
trading.

I do not claim that an information decomposition with negative partial information terms cannot possibly make sense.
However, it has to be made clear precisely how to interpret negative terms, and it is important to distinguish between
correct information that leads to a suboptimal decision due to unlikely events happening (``bad luck'') and incorrect
information that leads to decisions being based on the wrong posterior probabilities (as opposed to the ``correct''
conditional probabilities).

A third solution is to change the underlying lattice structure of the decomposition.  A first step in this direction was
done by~\citet{ChicharroPanzeri17:Dual_Decompositions} who propose to decompose mutual information according to subsets
of the partial information lattice.  However, it is also conceivable that the lattice has to be enlarged.

Williams and Beer derived the partial information lattice from their axioms together with the assumption that everything
can be expressed in terms of shared information (that is, according to ``who knows what'').  Shared information is
sometimes equivalently called \emph{redundant information}, but it may be necessary to distinguish the two.  Information
that is shared by several random variables is information that is accessible to each single random variable, but
redundancy can also arise at higher orders.  An example is the infamous XOR example from the proof of
Theorem~\ref{thm:incompatibility}: In this example, each pair $X_{i},X_{j}$ is independent and contains of two bits, but
the total system $X_{1},X_{2},X_{3}$ has only two bits.  Therefore, there is one bit of redundancy.  However, this
redundancy bit is not located anywhere specifically: It is not contained in either of $X_{1},X_{2},X_{3}$, and thus it
is not shared information.  Since the redundant bit is not part of~$X_{1}$, it is not ``shared'' by $X_{1}$ in this
sense.  This phenomenon corresponds to the fact that random variables can be pairwise independent without being
independent.

This kind of higher order redundancy does not have a place in the partial information lattice, so it may be that nodes
corresponding to higher order redundancy have to be added.  When the lattice is enlarged in this way, the structure of
the Möbius inversion is changed, and it is possible that the resulting lattice leads to nonnegative partial information
terms, without changing those cumulative information values that are already present in the original lattice.
If this approach succeeds, the answer to the question from the introduction will be negative: Simply classifying
information according to ``who knows what'' (i.e. shared information) does not work, since it does not capture higher
order redundancy.  The analysis of extensions of the partial information lattice is scope for future work.



\vspace{6pt} 


\newcommand{\acknowledgmentstext}{I thank Fero Matúš for teaching me about secret sharing schemes.  I am grateful to Guido
  Montúfar and Pradeep Kr.~Banerjee for their remarks about the manuscript, and to Nils Bertschinger, Jürgen
  Jost and Eckehard Olbrich for many inspiring discussions on the topic.  I thank the reviewers for many comments, in
  particular concerning the discussion.
  I thank the organizers and participants of the PID workshop in December 2016 in Frankfurt, where the material was first presented.
}
\ifEntropy
\acknowledgments{\acknowledgmentstext}
\else
\paragraph{\textbf{Acknowledgments}}
\acknowledgmentstext
\fi

\ifEntropy
\conflictsofinterest{The author declares no conflict of interest.} 
\fi

\ifEntropy
\appendixtitles{yes}
\appendixsections{one}
\else
\vfill
\fi
\appendix
\section{Combined secret sharing properties for small~\texorpdfstring{$k$}{k}}

This section discusses the defining equation~\eqref{eq:comb-ssp} of the combined secret sharing property for $k=1$ and
$k=2$.  The case $k=1$ is incorporated in the definition of a combination of perfect secret sharing schemes: The
following lemma implies that any measure of shared information that satisfies self-redundancy
satisfies~\eqref{eq:comb-ssp} for $k=1$.  Recall that Williams and Beer's self-redundancy axiom implies that
$I_{\cap}(S;X_{A}) = I(S;X_{A})$.
\begin{Lemma}
  \label{lem:k=1}
  Let $(S_{1},\dots,S_{l},X_{1},\dots,X_{n})$ be a combination of perfect secret sharing schemes with access structures
  $\Acal_{1},\dots,\Acal_{l}$.  Then
  \begin{equation*}
    I\big((S_{1},\dots,S_{l});X_{A}\big) = H\big(\{ S_{i} : A\in\Acal_{i}\}\big).
  \end{equation*}
\end{Lemma}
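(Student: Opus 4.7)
The plan is to split the secrets into the ``authorized'' and ``unauthorized'' parts relative to $A$ and show that the mutual information captures exactly the entropy of the authorized part. Let $T = \{S_i : A \in \Acal_i\}$ and $U = \{S_i : A \notin \Acal_i\}$, and write $I\big((S_1,\dots,S_l);X_A\big) = H(S_1,\dots,S_l) - H(S_1,\dots,S_l \mid X_A)$.

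For the unconditional entropy, I would first invoke the observation (stated in the paragraph after Definition~\ref{def:sss_combi}) that the secrets $S_1,\dots,S_l$ are mutually independent. Indeed, the definition gives $H(S_i \mid S_1,\dots,S_{i-1},S_{i+1},\dots,S_l) \ge H(S_i \mid S_1,\dots,S_{i-1},S_{i+1},\dots,S_l, X_\emptyset) = H(S_i)$ whenever $\emptyset \notin \Acal_i$, which is the case for every non-trivial access structure; the trivial case ($\emptyset \in \Acal_i$, so $S_i$ is constant) is harmless. Therefore $H(S_1,\dots,S_l) = H(T) + H(U)$.

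For the conditional entropy, I would expand by the chain rule, $H(S_1,\dots,S_l \mid X_A) = \sum_{i=1}^{l} H(S_i \mid S_1,\dots,S_{i-1}, X_A)$. If $S_i \in T$ then $S_i$ is a function of $X_A$, so the corresponding summand is $0$. If $S_i \in U$, then the defining equation of Definition~\ref{def:sss_combi} gives $H(S_i \mid S_1,\dots,S_{i-1},S_{i+1},\dots,S_l, X_A) = H(S_i)$; since conditioning on less cannot reduce entropy and cannot exceed $H(S_i)$, the corresponding summand is also exactly $H(S_i)$. Summing and using the independence of the secrets in $U$ yields $H(S_1,\dots,S_l \mid X_A) = \sum_{S_i \in U} H(S_i) = H(U)$.

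Combining the two computations gives $I\big((S_1,\dots,S_l);X_A\big) = H(T) + H(U) - H(U) = H(T)$, which is the claim. There is no real obstacle here: the only subtlety is being careful that the chain-rule summand for $S_i \in U$ really evaluates to $H(S_i)$, which uses the bound $H(S_i) \le H(S_i \mid \text{less conditioning}) \le H(S_i)$ combined with Definition~\ref{def:sss_combi}.
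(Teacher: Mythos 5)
Your proposal is correct and follows essentially the same route as the paper's proof: chain-rule expansion of $H(S_{1},\dots,S_{l}\mid X_{A})$, vanishing terms for authorized secrets, and a squeeze between the unconditional entropy and the bound from Definition~\ref{def:sss_combi} for the unauthorized ones, finishing with the independence remark. The only (immaterial) differences are that you keep the original ordering of the secrets instead of listing the authorized ones first, and you explicitly dispose of the degenerate case $\emptyset\in\Acal_{i}$.
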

\begin{proof}
  Suppose that the secret for which $A$ is authorized are $S_{1},\dots,S_{m}$.  Then
  \begin{multline*}
    H(S_{1},\dots,S_{l}|X_{A}) = H(S_{1},\dots,S_{m}|X_{A}) + H(S_{m+1},\dots,S_{l}|S_{1},\dots,S_{m},X_{A}) \\
     = H(S_{m+1},\dots,S_{l}|S_{1},\dots,S_{m},X_{A})
    \le H(S_{m+1},\dots,S_{l}) \le \sum_{i=m+1}^{l}H(S_{i}).
  \end{multline*}
  On the other hand,
  \begin{multline*}
    H(S_{m+1},\dots,S_{l}|S_{1},\dots,S_{m},X_{A})
    = \sum_{i=m+1}^{l} H(S_{i}|S_{1},\dots,S_{i-1},X_{A}) \\
    \ge \sum_{i=m+1}^{l} H(S_{i}|S_{1},\dots,S_{i-1},S_{i+1},\dots,S_{l},X_{A})
    = \sum_{i=m+1}^{l} H(S_{i}).
  \end{multline*}
  By independence (remark after Definition~\ref{def:sss_combi}), $\sum_{i=m+1}^{l}H(S_{i}) = H(S_{m+1},\dots,S_{l})$ and
  $\sum_{i=1}^{m}H(S_{i}) = H(S_{1},\dots,S_{m})$.  Thus,
  \begin{equation*}
    I\big((S_{1},\dots,S_{l});X_{A}\big) =
    H(S_{1},\dots,S_{l}) - H(S_{1},\dots,S_{l}|X_{A}) = H(S_{1},\dots,S_{m}).
    \qedhere
  \end{equation*}
\end{proof}

The next result shows that the bivariate measure of shared information $\widetilde{SI}(S;X,Y)$ proposed by
\citet{BROJA13:Quantifying_unique_information} satisfies Eq.~\eqref{eq:comb-ssp} for~$k\le 2$.  The reader is referred
to \emph{loc. cit.} for definitions and elementary properties of~$\widetilde{SI}$.
\begin{Proposition}
  \label{lem:k=2}
  Let $(S_{1},\dots,S_{l},X_{1},\dots,X_{n})$ be a combination of perfect secret sharing schemes with access structures
  $\Acal_{1},\dots,\Acal_{l}$,
  Then
  \begin{equation*}
    \widetilde{SI}\big((S_{1},\dots,S_{l});X_{A_{1}},X_{A_{2}}\big) = H\big(\{ S_{i} : A\in\Acal_{1}\cap\Acal_{2}\}\big).
  \end{equation*}
\end{Proposition}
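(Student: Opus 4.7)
The plan is to reduce the claim to a computation of the BROJA unique information, using the identity $\widetilde{SI}(S;X,Y) = I(S;X) - \widetilde{UI}(S;X\setminus Y)$ and the characterization
\[
\widetilde{UI}(S;X\setminus Y) = \min_{Q\in\Delta_{P}} I_{Q}(S;X\mid Y),
\]
where $\Delta_{P}$ is the set of joint distributions $Q$ on $(S,X,Y)$ with $Q(S,X)=P(S,X)$ and $Q(S,Y)=P(S,Y)$. Writing $S=(S_{1},\dots,S_{l})$, $X=X_{A_{1}}$, $Y=X_{A_{2}}$, I partition $\{1,\dots,l\}$ according to which of $A_{1},A_{2}$ is authorized for $\Acal_{i}$, giving index sets $M_{11},M_{10},M_{01},M_{00}$. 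The target value is $H(S_{M_{11}})$, and by independence of the secrets (the remark following Definition~\ref{def:sss_combi}) this equals $\sum_{i\in M_{11}}H(S_{i})$.

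First I would use Lemma~\ref{lem:k=1} (together with the independence of the $S_{i}$) to obtain the clean identities $I(S;X_{A_{1}})=H(S_{M_{11}\cup M_{10}})$ and $H_{P}(S\mid X_{A_{2}})=H(S)-I(S;X_{A_{2}})=H(S_{M_{10}\cup M_{00}})$. These are the ingredients I need on the $P$ side. The heart of the argument is then the lower bound $I_{Q}(S;X_{A_{1}}\mid X_{A_{2}})\ge H(S_{M_{10}})$ for every $Q\in\Delta_{P}$. Since the marginal constraint forces $H_{Q}(S\mid X_{A_{2}}) = H_{P}(S\mid X_{A_{2}})=H(S_{M_{10}\cup M_{00}})$, it suffices to show $H_{Q}(S\mid X_{A_{1}},X_{A_{2}})\le H(S_{M_{00}})$. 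For this, I would observe that for $i\in M_{11}\cup M_{10}$ the perfection of the $i$-th scheme makes $S_{i}$ a deterministic function of $X_{A_{1}}$ under $P$; because $Q(S_{i},X_{A_{1}})$ is a marginal of $Q(S,X_{A_{1}})=P(S,X_{A_{1}})$, the same functional relation holds under~$Q$. Symmetrically $S_{i}$ is a function of $X_{A_{2}}$ under $Q$ for $i\in M_{11}\cup M_{01}$. Hence under $Q$ the whole vector $S_{M_{11}\cup M_{10}\cup M_{01}}$ is determined by $(X_{A_{1}},X_{A_{2}})$, so $H_{Q}(S\mid X_{A_{1}},X_{A_{2}})=H_{Q}(S_{M_{00}}\mid X_{A_{1}},X_{A_{2}})\le H_{Q}(S_{M_{00}})=H(S_{M_{00}})$, the last equality using that $Q(S)=P(S)=\prod_{i}P(S_{i})$. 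The matching upper bound is attained by $Q=P$: a direct computation gives $I_{P}(S;X_{A_{1}}\mid X_{A_{2}})=H(S_{M_{10}\cup M_{00}})-H(S_{M_{00}})=H(S_{M_{10}})$. Therefore $\widetilde{UI}(S;X_{A_{1}}\setminus X_{A_{2}})=H(S_{M_{10}})$, and substituting back yields $\widetilde{SI}=H(S_{M_{11}\cup M_{10}})-H(S_{M_{10}})=H(S_{M_{11}})$, which is the asserted value.

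The step I expect to require the most care is the claim that $S_{i}$ is a function of $X_{A_{1}}$ under every $Q\in\Delta_{P}$ for each $i\in M_{11}\cup M_{10}$. Functional dependence is a property of the bivariate marginal $(S_{i},X_{A_{1}})$, which is determined by $P(S,X_{A_{1}})=Q(S,X_{A_{1}})$, so the statement is actually immediate; the subtlety is only in being explicit that it is not the whole joint $(S,X_{A_{1}},X_{A_{2}})$ that must be preserved, merely the two bivariate marginals needed to transport the functional relations separately through $X_{A_{1}}$ and through $X_{A_{2}}$, and then the independence $Q(S)=P(S)$ to handle the residual entropy on $M_{00}$. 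Everything else is bookkeeping with the partition $M_{11},M_{10},M_{01},M_{00}$ and Lemma~\ref{lem:k=1}.
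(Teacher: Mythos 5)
Your overall strategy --- a lower bound on $I_{Q}(S;X_{A_{1}}\mid X_{A_{2}})$ valid for every $Q\in\Delta_{P}$, plus an explicit member of $\Delta_{P}$ attaining it --- is the same as the paper's, and your lower-bound argument (transporting the functional relations $S_{i}=f(X_{A_{j}})$ through the preserved pairwise marginals and using the independence of the secrets to evaluate $H_{Q}(S_{M_{00}})$) is correct and matches the paper's. The gap is in the attainment step: the minimum of $I_{Q}(S;X_{A_{1}}\mid X_{A_{2}})$ over $\Delta_{P}$ is in general \emph{not} attained at $Q=P$. Your computation needs $H_{P}(S\mid X_{A_{1}},X_{A_{2}})=H(S_{M_{00}})$, but under $P$ the pair $(X_{A_{1}},X_{A_{2}})$ is the joint variable $X_{A_{1}\cup A_{2}}$, and the union $A_{1}\cup A_{2}$ may be authorized for a scheme indexed in $M_{00}$ even though neither $A_{1}$ nor $A_{2}$ is. Concretely, in Example~\ref{ex:psss} with the single secret $S$ and $A_{1}=\{A\}$, $A_{2}=\{B\}$, all blocks except $M_{00}$ are empty, yet $H_{P}(S\mid X_{A_{1}},X_{A_{2}})=0$, so $I_{P}(S;X_{A_{1}}\mid X_{A_{2}})=H(S)=\SI{1}{bit}$, whereas your claimed optimal value is $H(S_{M_{10}})=0$ (which \emph{is} the correct value of the minimum, since $\widetilde{SI}$ must equal $H(S_{M_{11}})=0$ here). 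This synergy present in $P$ is exactly what the optimization in the definition of $\widetilde{SI}$ is designed to strip away, so one cannot expect $P$ itself to be the optimizer.

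The repair is the paper's choice of optimizer: take $Q^{*}(s,x_{1},x_{2})=P(s)\,P(x_{A_{1}}=x_{1}\mid s)\,P(x_{A_{2}}=x_{2}\mid s)$, under which $X_{A_{1}}$ and $X_{A_{2}}$ are conditionally independent given $S$. By Definition~\ref{def:sss_combi}, for $i\in M_{00}$ the secret $S_{i}$ is independent of $(S_{-i},X_{A_{1}})$ and of $(S_{-i},X_{A_{2}})$ under $P$, so the conditional laws $P(x_{A_{j}}\mid s)$ do not depend on $s_{M_{00}}$; together with the independence of the secrets this makes $S_{M_{00}}$ independent of the pair $(X_{A_{1}},X_{A_{2}})$ under $Q^{*}$, giving $H_{Q^{*}}(S\mid X_{A_{1}},X_{A_{2}})=H(S_{M_{00}})$ exactly and hence $I_{Q^{*}}(S;X_{A_{1}}\mid X_{A_{2}})=H(S_{M_{10}})$. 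Substituting $Q^{*}$ for $P$ in your attainment step, the remaining bookkeeping with $M_{11},M_{10},M_{01},M_{00}$ and Lemma~\ref{lem:k=1} goes through unchanged and yields $\widetilde{SI}=H(S_{M_{11}\cup M_{10}})-H(S_{M_{10}})=H(S_{M_{11}})$.
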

\begin{proof}
  For given $A_{1},A_{2}$, suppose that $S_{1},\dots,S_{m}$ are the secrets for which at least one of $A_{1}$ or $A_{2}$
  is authorized and that $S_{m+1},\dots,S_{l}$ are the secrets for which neither $A_{1}$ nor $A_{2}$ is authorized
  alone.

  Let $P$ be the joint distribution of $S_{1},\dots,S_{l},X_{A_{1}},X_{A_{2}}$.  Let $\Delta_{P}$ be the set of
  alternative joint distributions for $S_{1},\dots,S_{l},X_{A_{1}},X_{A_{2}}$ that have the same marginal distributions
  as $P$ on the subsets $(S_{1},\dots,S_{l},X_{A_{1}})$ and~$(S_{1},\dots,S_{l},X_{A_{2}})$.  According to the
  definition of $\widetilde{SI}$, we need to compare $P$ with the elements of~$\Delta_{P}$ and find the maximum of
  $H_{Q}\big((S_{1},\dots,S_{l})\big|X_{A_{1}},X_{A_{2}}\big)$ over~$Q\in\Delta_{P}$, where the subscript to $H$ indicates
  with respect to which of these joint distributions the conditional entropy is evaluated.

  Define a distribution $Q^{*}$ for $S_{1},\dots,S_{l},X_{A_{1}},X_{A_{2}}$ by
  \begin{equation*}
    Q^{*}(s_{1},\dots,s_{l},x_{1},x_{2})
    = P(s_{1},\dots,s_{l}) P(x_{A_{1}} = x_{1}|s_{1},\dots,s_{l}) P(x_{A_{2}} = x_{2}|s_{1},\dots,s_{l}).
  \end{equation*}
  Then $Q^{*}\in\Delta_{P}$.  Under $P$, the secrets $S_{m+1},\dots,S_{l}$ are independent of $X_{A_{1}}$ (marginally)
  and independent of~$X_{A_{2}}$, and so $S_{m+1},\dots,S_{l}$ are independent of the pair $(X_{A_{1}},X_{A_{2}})$
  under~$Q^{*}$.  On the other hand, $S_{1},\dots,S_{m}$ are a function of either $X_{A_{1}}$ or $X_{A_{2}}$ under~$P$,
  and so $S_{1},\dots,S_{m}$ is a function of $(X_{A_{1}},X_{A_{2}})$ under~$Q^{*}$.  Thus,
  \begin{equation*}
    H_{Q^{*}}(S_{1},\dots,S_{l}|X_{A_{1}},X_{A_{2}})
    = H_{Q^{*}}(S_{m+1},\dots,S_{l}) = H_{P}(S_{m+1},\dots,S_{l}).
  \end{equation*}
  On the other hand, under any joint distribution $Q\in\Delta_{P}$, the secrets $S_{1},\dots,S_{m}$ are functions of
  $X_{A_{1}},X_{A_{2}}$, whence
  \begin{equation*}
    H_{Q}(S_{1},\dots,S_{l}|X_{A_{1}},X_{A_{2}})
    \le H_{Q}(S_{m+1},\dots,S_{l}) = H_{P}(S_{m+1},\dots,S_{l}).
  \end{equation*}
  It follows that $Q^{*}$ solves the optimization problem in the definition of~$\widetilde{SI}$.

  Suppose that the secrets for which $X_{A_{1}}$ is authorized are $S_{1},\dots,S_{r}$ and that the secrets for which
  $X_{A_{2}}$ is authorized are $S_{s},\dots,S_{m}$ (with $1\le r,s\le m$).  One computes
  \begin{align*}
    I_{Q^{*}}\big((S_{1},\dots,S_{l});X_{A_{1}}\big|X_{A_{2}}\big)
    &= H(S_{1},\dots,S_{s-1}) = \sum_{i=1}^{s-1}H(S_{i})
    \quad\text{ and }\\
    I_{Q^{*}}\big((S_{1},\dots,S_{l});X_{A_{1}}\big)
    &= H(S_{1},\dots,S_{r}) = \sum_{i=1}^{r}H(S_{i}),
  \end{align*}
  whence
  \begin{multline*}
    \widetilde{SI}\big((S_{1},\dots,S_{l});X_{A_{1}},X_{A_{2}}\big)
    \ifEntropy\else\\\fi
    = I_{Q^{*}}\big((S_{1},\dots,S_{l});X_{A_{1}}\big) - I_{Q^{*}}\big((S_{1},\dots,S_{l});X_{A_{1}}\big|X_{A_{2}}\big) \\
    = \sum_{i=s}^{r}H(S_{i}) = H(S_{s},\dots,S_{r}).
    \qedhere
  \end{multline*}
\end{proof}

\ifEntropy
\externalbibliography{yes}
\else
\bibliographystyle{mdpi}
\fi
\bibliography{../Info}

\begin{thebibliography}{-------}
\providecommand{\natexlab}[1]{#1}

\bibitem[Williams and
  Beer(2010)]{WilliamsBeer:Nonneg_Decomposition_of_Multiinformation}
Williams, P.; Beer, R.
\newblock Nonnegative Decomposition of Multivariate Information.
\newblock {\em arXiv:1004.2515v1} {\bf 2010}.

\bibitem[Beimel(2011)]{Beimel:Secret_sharing_survey}
Beimel, A.
\newblock Secret-sharing Schemes: A Survey.
\newblock  Proceedings of the Third International Conference on Coding and
  Cryptology; Springer-Verlag: Berlin, Heidelberg,  2011; pp. 11--46.

\bibitem[Maurer and Wolf(1997)]{MaurerWolf97:intrinsic_conditional_MI}
Maurer, U.; Wolf, S.
\newblock The intrinsic conditional mutual information and perfect secrecy.
\newblock  Proc. IEEE ISIT,  1997.

\bibitem[Csiszar and
  Narayan(2004)]{CsiszarNarayan04:SEcrecy_capacities_for_multiple_terminals}
Csiszar, I.; Narayan, P.
\newblock Secrecy capacities for multiple terminals.
\newblock {\em IEEE Transactions on Information Theory} {\bf 2004}, {\em
  50},~3047--3061.

\bibitem[Ito \em{et~al.}(1987)Ito, Saito, and
  Nishizeki]{ItoSaitoNishizeki87:General_secret_sharing_schemes}
Ito, M.; Saito, A.; Nishizeki, T.
\newblock Secret sharing scheme realizing general access structure.
\newblock  Proceedings of the IEEE Global Telecommunication Conf.,  1987, pp.
  99--102.

\bibitem[Bertschinger \em{et~al.}(2013)Bertschinger, Rauh, Olbrich, and
  Jost]{BROJ13:Shared_information}
Bertschinger, N.; Rauh, J.; Olbrich, E.; Jost, J.
\newblock Shared Information --- New Insights and Problems in Decomposing
  Information in Complex Systems. In {\em Proc. ECCS 2012}; Springer,  2013;
  pp. 251--269.

\bibitem[Rauh \em{et~al.}(2014)Rauh, Bertschinger, Olbrich, and
  Jost]{RBOJ14:Reconsidering_unique_information}
Rauh, J.; Bertschinger, N.; Olbrich, E.; Jost, J.
\newblock Reconsidering unique information: Towards a multivariate information
  decomposition.
\newblock  Proc. IEEE ISIT,  2014, pp. 2232--2236.

\bibitem[Barrett(2014)]{Barrett2014:Gaussian_information_decomposition}
Barrett, A.B.
\newblock An exploration of synergistic and redundant information sharing in
  static and dynamical Gaussian systems.
\newblock {\em CoRR} {\bf 2014}, {\em abs/1411.2832}.

\bibitem[Harder \em{et~al.}(2013)Harder, Salge, and
  Polani]{HarderSalgePolani2013:Bivariate_redundancy}
Harder, M.; Salge, C.; Polani, D.
\newblock A Bivariate measure of redundant information.
\newblock {\em Phys. Rev. E} {\bf 2013}, {\em 87},~012130.

\bibitem[Bertschinger \em{et~al.}(2014)Bertschinger, Rauh, Olbrich, Jost, and
  Ay]{BROJA13:Quantifying_unique_information}
Bertschinger, N.; Rauh, J.; Olbrich, E.; Jost, J.; Ay, N.
\newblock Quantifying unique information.
\newblock {\em Entropy} {\bf 2014}, {\em 16},~2161--2183.

\bibitem[Ince(2017)]{Ince17:Iccs}
Ince, R.
\newblock Measuring multivariate redundant information with pointwise common
  change in surprisal.
\newblock {\em Entropy} {\bf 2017}, {\em 19},~318.

\bibitem[Chicharro and Panzeri(2017)]{ChicharroPanzeri17:Dual_Decompositions}
Chicharro, D.; Panzeri, S.
\newblock Synergy and Redundancy in Dual Decompositions of Mutual Information
  Gain and Information Loss.
\newblock {\em Entropy} {\bf 2017}, {\em 19}.

\bibitem[Fano(1961)]{Fano61:Transmission_of_information}
Fano, R.M.
\newblock {\em Transmission of Information}; MIT Press: Cambridge, MA,  1961.

\bibitem[Wibral \em{et~al.}(2015)Wibral, Lizier, and
  Priesemann]{WibralLizierPriesemann14:Bits_from_Biology}
Wibral, M.; Lizier, J.T.; Priesemann, V.
\newblock Bits from Brains for Biologically Inspired Computing.
\newblock {\em Frontiers in Robotics and AI} {\bf 2015}, {\em 2},~5.

\bibitem[{Macmillan Publishers Limited}(retrieved on
  2017/10/05)]{Macmillan:Dictionary}
{Macmillan Publishers Limited}.
\newblock Macmillan Dictionary.
\newblock Available at \url{http://www.macmillandictionary.com/},  retrieved on
  2017/10/05.

\bibitem[Ince(2017)]{Ince17:PED}
Ince, R.
\newblock The Partial Entropy Decomposition: Decomposing multivariate entropy
  and mutual information via pointwise common surprisal.
\newblock {\em arXiv:1702.01591} {\bf 2017}.

\end{thebibliography}

\end{document}
